\documentclass[letterpaper, 10 pt, conference]{ieeeconf}
\IEEEoverridecommandlockouts
\usepackage{hyperref}
\usepackage{graphics}
\usepackage{epsfig}
\usepackage{amsmath}
\usepackage{amssymb}
\usepackage{xcolor}
\let\labelindent\relax
\usepackage{enumitem}
\usepackage{multirow}
\usepackage{url}
\usepackage{lipsum}
\usepackage[noadjust]{cite} 
  
\usepackage{amsthm}
\theoremstyle{definition}
\newtheorem{lemma}{\normalfont \bfseries Lemma}
\newtheorem{corollary}{\normalfont \bfseries Corollary}

\newtheorem{theorem}{\normalfont\bfseries Theorem}
\newtheorem{assumption}{\normalfont\bfseries Assumption}

\newtheorem{definition}{\normalfont\bfseries Definition}

\newtheorem{remark}{\normalfont\bfseries Remark}

\newcommand{\Ns}{N}
\newcommand{\J}{\mathcal{J}}

\newcommand{\X}{\mathbb{R}^n}

\newcommand{\bx}{\mathbf{x}}

\newcommand{\bzero}{\mathbf{0}}
\renewcommand{\bf}{\mathbf{f}}
\newcommand{\bg}{\mathbf{g}}
\newcommand{\bu}{\mathbf{u}}
\newcommand{\bk}{\mathbf{k}}

\newcommand{\C}{\mathcal{C}}
\newcommand{\D}{\mathcal{D}}
\newcommand{\Cp}{\mathcal{C}_{\mathrm{p}}}

\newcommand{\Sb}{\mathcal{S}_\mathrm{b}}

\newcommand{\SIp}{\mathcal{S}_{\mathrm{I}}^{\mathrm{p}}}
\newcommand{\SI}{\mathcal{S}_{\mathrm{I}}}

\newcommand{\R}{\mathbb{R}}
\newcommand{\Rn}{\mathbb{R}^n}
\newcommand{\Rm}{\mathbb{R}^m}

\newcommand{\hb}{h_\mathrm{b}}
\newcommand{\kd}{\mathbf{k}_{\mathrm{d}}}
\newcommand{\kb}{\mathbf{k}_\mathrm{b}}

\newcommand{\U}{\mathcal{U}}

\newcommand{\K}{\mathcal{K}}
\newcommand{\Ke}{\K^{\rm e}}

\newcommand{\phib}{\boldsymbol{\varphi}_{\mathrm{b}}}

\newcommand{\ba}{\mathbf{a}}

\title{\LARGE \textbf{
Enforcing Mixed State-Input Constraints with Multiple Backup Control Barrier Functions: A Projection-based Approach
}}

\author{Laszlo Gacsi, Adam K. Kiss, Ersin Da\c{s}, and Tamas G. Molnar%
\thanks{L. Gacsi and T. G. Molnar are with the Department of Mechanical Engineering, Wichita State University, Wichita, KS 67260, USA,
{\tt\small lxgacsi@shockers.wichita.edu, tamas.molnar@wichita.edu}.}%
\thanks{Adam K. Kiss is with the Department of Applied Mechanics, Faculty of Mechanical Engineering, Budapest University of Technology and Economics, Budapest, Hungary, {\tt\small kiss\_a@mm.bme.hu}.}%
\thanks{Ersin Da\c{s} is with the Department of Mechanical, Materials, and Aerospace Engineering, Illinois Institute of Technology, Chicago, IL 60616, USA, {\tt\small edas2@illinoistech.edu}.}%
\vspace{-1mm}
}

\begin{document}

\maketitle
\thispagestyle{empty}
\pagestyle{empty}

\begin{abstract}
Ensuring the safety of control systems often requires the satisfaction of constraints on states (such as position or velocity), control inputs (such as force), and a mixture of states and inputs (such as power that depends on both velocity and force).
This paper presents a safety-critical control framework for enforcing {\em mixed state-input constraints} through a generalization of backup control barrier functions (backup CBFs).
First, we extend the backup CBF approach to maintain multiple {\em decoupled} state and input constraints using a single backup set--backup controller pair.
Second, we address {\em mixed} state-input constraints by converting them into state constraints using a projection from the state-input space to the state space along the backup controller.
In the special case of decoupled state and input constraints, the proposed method simplifies the synthesis of backup CBFs by eliminating the need for saturating backup control laws.
Finally, we demonstrate the efficacy of the proposed method on an inverted pendulum example, where constraints on the angle (state), torque (input), and power (mixture of state and input) are satisfied simultaneously.
\end{abstract}

\section{Introduction}
\label{sec:intro}

Safe behavior is an essential requirement for controlling modern autonomous systems.
Formally, safety is often defined as a state constraint--keeping the system's state within a safe set in the state space--which can be enforced, for example, using \textit{control barrier functions} (CBFs) \cite{ames2017cbf}.
Yet, this formulation often does not take into account the
limitations of control inputs explicitly.
Synthesizing safe controllers for dynamical systems subject to both state and input constraints is a common challenge in real-world control problems. 

In many applications, such as robotic systems and vehicles, state and input constraints are coupled.
For example, acceleration-based safety constraints for rigid-body robotic systems are both state and input-dependent \cite{de2019trajectory}.
For autonomous ground vehicles, the admissible stability region can vary with steering angle and braking force, making the corresponding safety constraint depend on both state and input \cite{huang2021stability}. Similarly, tipover avoidance constraints for legged and wheeled robots explicitly depend not only on states but also on inputs \cite{de2019trajectory, tan2025zero}. Input-dependent constraints also arise in passivity-based teleoperation of multi-robot systems \cite{notomista2024stable}, and the trajectory space forward invariance problem \cite{vahs2025forward}.

To address {\em decoupled} state and input limits within the  CBF framework, integral CBF methods \cite{ames2020integral, dacs2024rollover} augment the system dynamics by treating the inputs as new states.
Recently, in \cite{tan2025zero}, the notion of zero-order CBFs was introduced to enforce input-dependent safety via a one-step sampled-data condition; however, the resulting constraint is generally nonconvex.
An alternative approach is the
\textit{backup CBF method}~\cite{gurriet2020scalable,chen2021backup} which offers both formal safety guarantees and feasibility under input constraints.
Backup CBFs have been applied to mobile robots \cite{janwani2024learning}, spacecraft \cite{van2025safety},
manipulators \cite{Kobackup}, 
with other works exploring
non-smooth constraints \cite{he2025predictive}, timed reach-avoid specifications \cite{zhang2026safety}, and dynamic safety margin \cite{freire2025designing}. 
Yet, backup CBFs have not been established for {\em mixed} (coupled) state-input constraints.




\begin{figure}
    \centering
    \includegraphics[width=0.98\linewidth]{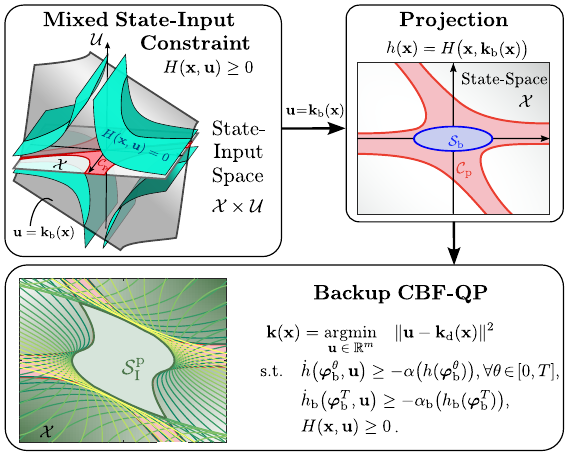}
    \vspace{-5mm}
    \caption{Overview of the proposed projection-based backup CBF approach.}
    \label{fig:overview}
    \vspace{-4mm}
\end{figure}

In this paper, we develop a safe control approach to enforce mixed state-input constraints using backup CBFs.
First, we extend the backup CBF method to handle {\em multiple decoupled} state and input constraints using a single backup set--backup controller pair.
Then, we address {\em mixed} constraints.
The key idea of our approach, shown in Fig.~\ref{fig:overview}, is a projection
from the state-input space to the state space, to transform mixed constraints into pure state constraints.
If the mixed constraints are affine in the input, the proposed backup CBF-based controller synthesis problem becomes an easy-to-solve quadratic program.
Moreover, for decoupled state and input constraints, our approach also helps to simplify the design of backup controllers and the calculation of backup CBFs by obviating the need for saturating backup controllers.
We showcase our approach on an inverted pendulum subject to angle (state), torque (input), and power (mixed) constraints.

\section{Input Constrained Safety}
\label{sec:CBF}
The system of interest is considered in the following form:
\begin{equation}
    \dot{\bx}=
    \bf(\bx)+\bg(\bx)\bu\,,
    \label{eq:control_system}
\end{equation}
with state
${\bx\in\X}$, input
${\bu\in\U\subseteq \Rm}$,
where $\U$ is assumed to be a convex polytope, and smooth functions
${\bf: \X \to \R^n}$,
${\bg: \X \to \R^{n\times m}}$. A locally Lipschitz controller
${\bk : \X \to \U}$,
${\bu = \bk(\bx)}$,
yields the closed-loop system:
\begin{equation}
    \dot{\bx}=\bf(\bx)+\bg(\bx)\bk(\bx)\,.
    \label{eq:closedloop}
\end{equation}
We revisit controllers that enforce state constraints, maintaining the state inside a constraint set, ${\bx(t) \in \C \subset \X}$, while respecting {\em decoupled} input limits, ${\bu(t) \in \U}$, for all time.

\subsection{Classical Control Barrier Function Framework}

The state constraint set $\C$ is defined using a smooth constraint  function ${\psi:\X\to\R}$:
\begin{align}
    \C&=\{ \bx\in \X : \psi(\bx)\geq0\}\,.
    \label{eq:constraint_set}
\end{align}
The safety of system~\eqref{eq:closedloop} is then synonymous to the forward invariance of $\C$: if ${\bx_0 \in \C}$, then ${\bx(t) \in \C}$ for all ${t \geq 0}$.
This can be ensured using control barrier functions (CBFs).

\begin{definition}[\cite{ames2017cbf}]
Function $\psi$ is a {\em control barrier function} for \eqref{eq:control_system} on $\C$ if there exists\footnote{Function  ${\alpha : [0,a) \to \R}$, ${a>0}$, is of class-$\K$ (${\alpha \in \K}$) if it is continuous, strictly increasing, and ${\alpha(0)=0}$.
Function ${\alpha : (-b,a) \to \R}$, ${a,b>0}$ is of extended class-$\K$ (${\alpha \in \Ke}$) if it has the same properties.} ${\alpha \in \Ke}$ such that:
\begin{equation}
    \sup_{\bu \in \U}\,\dot{\psi}(\bx,\bu) > -\alpha\big(\psi(\bx)\big)
    \label{eq:forward}
\end{equation}
for all ${\bx \in \C}$, with
${\dot{\psi}(\bx,\bu) = \nabla \psi(\bx) \cdot \big( \mathbf{f}(\bx)+\mathbf{g}(\bx)\bu \big)}$.
\end{definition} 
Given a CBF, the set $\C$ can be rendered forward invariant.

\begin{theorem}[\cite{ames2017cbf}]
    \textit{If $\psi$ is a CBF for \eqref{eq:control_system} on $\C$, then any locally Lipschitz controller $\mathbf{u = k(x)}$ satisfying:
\begin{equation}
    \dot{\psi}\big(\mathbf{x,k(x)}\big) \geq -\alpha\big(\psi(\bx)\big)\,,
    \label{eq:forward_inv}
\end{equation}
for all ${\bx \in \C}$, renders $\C$ forward invariant for~\eqref{eq:closedloop}.}
\label{th1}
\end{theorem}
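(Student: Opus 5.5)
The plan is a comparison-lemma argument applied along closed-loop trajectories. Fix ${\bx_0 \in \C}$. Since $\bk$ is locally Lipschitz and $\bf,\bg$ are smooth, the closed-loop vector field in \eqref{eq:closedloop} is locally Lipschitz. Hence there is a unique solution $\bx(t)$ on a maximal interval $[0,t_{\max})$, and forward invariance only needs to be checked on this interval.

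Along the solution, define ${y(t)=\psi(\bx(t))}$. This is continuously differentiable, with ${\dot y(t)=\dot\psi\big(\bx(t),\bk(\bx(t))\big)}$. By \eqref{eq:forward_inv}, we have ${\dot y \geq -\alpha(y)}$ whenever ${\bx(t)\in\C}$, i.e., whenever ${y(t)\geq 0}$.

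We argue by contradiction. Suppose there is ${t_1<t_{\max}}$ with ${y(t_1)<0}$, and let ${\tau=\sup\{t\in[0,t_1]: y(t)\geq 0\}}$. By continuity, ${y(\tau)=0}$ and ${y<0}$ on $(\tau,t_1]$.

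The main obstacle is that \eqref{eq:forward_inv} is only assumed on $\C$, so it gives no information on $(\tau,t_1]$, where the trajectory has already left $\C$. I would use one of two fixes.

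\emph{Option 1: extend the inequality.} Read the condition as holding on an open neighborhood of $\C$, which is the standard setting of \cite{ames2017cbf}, where $\alpha\in\Ke$ is defined on $(-b,a)$ for exactly this reason. Then ${\dot y\geq-\alpha(y)}$ on $[\tau,t_1]$ for $t_1$ close enough to $\tau$. The scalar ODE ${\dot z=-\alpha(z)}$, ${z(\tau)=0}$, has ${z\equiv 0}$ as a solution. The comparison lemma then gives ${y(t)\geq 0}$ on this interval, contradicting ${y<0}$ there. Uniqueness of the comparison solution may fail when $\alpha$ is merely continuous. I would handle this either by assuming $\alpha$ locally Lipschitz, or by noting that ${\dot y\geq -\alpha(y) > 0}$ wherever ${y<0}$, since ${\alpha(y)<0}$ for ${y<0}$ when ${\alpha\in\Ke}$. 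Then $y$ is strictly increasing on $(\tau,t_1]$, so ${y(t_1)>y(\tau)=0}$, a contradiction.

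\emph{Option 2: stay inside $\C$ (Nagumo).} Alternatively, use only the boundary information. On $\partial\C$ we have ${\psi=0}$, so \eqref{eq:forward_inv} gives ${\dot\psi\geq -\alpha(0)=0}$. This is Nagumo's sub-tangentiality condition, which implies forward invariance of the closed set $\C$ under uniqueness of solutions. Strictly, this route needs a regularity assumption such as ${\nabla\psi\neq 0}$ on $\partial\C$ so that the tangent cone is the half-space ${\nabla\psi\cdot v\geq 0}$; I would state this assumption explicitly.

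Either way, $\bx(t)$ remains in $\C$ for all ${t\in[0,t_{\max})}$. If $\C$ is compact, the solution cannot escape in finite time, so ${t_{\max}=\infty}$ and the conclusion holds for all ${t\geq 0}$. Otherwise, invariance is understood on the maximal interval of existence, as in \cite{ames2017cbf}.
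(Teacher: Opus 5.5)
The paper gives no proof of this theorem; it is imported from \cite{ames2017cbf}. Your Option~2 (Nagumo's condition on $\partial\C$) is the standard argument behind that result, so your approach is the expected one.

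The one thing to fix is that you need not add $\nabla\psi\neq\bzero$ on $\partial\C$ as an extra assumption. The paper's CBF definition uses a \emph{strict} inequality. So at any $\bx\in\C$ with $\psi(\bx)=0$ it gives $\sup_{\bu\in\U}\nabla\psi(\bx)\cdot\big(\bf(\bx)+\bg(\bx)\bu\big)>-\alpha(0)=0$, which is impossible if $\nabla\psi(\bx)=\bzero$. Hence the zero level set is regular and equals $\partial\C$, and the tangent cone there is exactly the half-space you want.

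This is the only place the hypothesis ``$\psi$ is a CBF'' enters; as written, neither of your options uses it. It is genuinely needed: with $\bf\equiv 1$, $\bg\equiv 0$, and $\psi(x)=-x^2$, the controller inequality holds on $\C=\{0\}$, yet $\C$ is not forward invariant.

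Option~1 instead proves a variant with the inequality assumed on an open neighborhood of $\C$, which is the setting of \cite{ames2017cbf}. The statement here grants the inequality only on $\C$, so you get no bound on $\dot y$ over $(\tau,t_1]$. Option~2, with the regularity derived as above, is therefore the argument that matches the theorem as written.
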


This result can be used for safety-critical control synthesis, by solving the quadratic program (QP):
\begin{equation}
\begin{aligned}
\mathbf{k}(\bx) = \, \underset{\bu\in\U}{\operatorname{argmin}} \quad & \|\bu-\kd(\bx)\|^2\\
\textrm{s.t.} \quad & \dot{\psi}(\bx,\bu) \geq -\alpha\big(\psi(\bx)\big)\,,
\label{eq:QP}
\end{aligned}
\end{equation}
where the desired controller
${\kd : \X \to \U}$ is minimally modified in a way that $\C$ is rendered forward invariant. The CBF-QP in \eqref{eq:QP} is feasible if $\psi$ is a CBF as defined in \eqref{eq:forward}.

However, an arbitrary constraint function $\psi$ may not necessarily be a CBF.
Finding CBFs can be especially challenging with input constraints (${\U \subset \Rm}$), because the input that could maintain the system's safety might be outside the admissible input set.
A constructive approach that addresses both state and input constraints is the \textit{backup CBF method}.

\subsection{Backup Control Barrier Functions}
\label{sec:backup}

The foundation of the backup CBF method \cite{gurriet2020scalable,chen2021backup} is the use of a {\em backup controller} and an associated {\em backup set}.
The backup set is a subset of the constraint set, defined as:
\begin{equation}
    \Sb =\{ \bx\in \X : \hb(\bx)\geq0\}\,,
    \label{eq:backup_set}
\end{equation}
with the smooth function ${\hb : \X \to \R}$, such that ${\Sb \subseteq \C}$.
The backup controller is a smooth controller ${\kb : \X \to \U}$ rendering the backup set forward invariant along the corresponding closed-loop system:
\begin{equation}
    \dot{\bx}=\bf(\bx)+\bg(\bx)\kb(\bx)\,.
    \label{eq:f_b}
\end{equation}
The solution of \eqref{eq:f_b} over time ${\theta\geq 0}$ starting from state $\bx$ is the {\em backup flow} $\phib(\theta,\bx)$.
A procedure for constructing backup set--backup controller pairs is elaborated on in \cite{gacsi2025braking}.

Typically, the backup set is a small subset of the constraint set, representing safe but (very) conservative behavior.
Therefore, the backup controller $\kb$ is not used directly to keep the system inside the backup set $\Sb$, but rather to expand $\Sb$ into a larger set ${\SI\subseteq\C}$ that is invariant under $\kb$:
\begin{equation}
    \SI=
    \left\{ \bx \in \X :
    \begin{array}{l}
    \;\; \psi \big( \phib(\theta,\bx) \big) \geq 0, \ \forall \theta \in [0,T], \\
    \hb \big( \phib(T,\bx) \big) \geq 0
    \end{array}
    \! \! \right\}\,.
    \label{eq:SI}
\end{equation}
This set encodes the states from which the backup set can be safely reached over a horizon ${T>0}$ using the backup controller.
This construction enables the synthesis of controllers satisfying decoupled state and input constraints by rendering $\SI$ forward invariant, as established by the following theorem.

\begin{theorem}[\cite{molnar2023safety}] \label{theo:backup}
\textit{Consider the system \eqref{eq:control_system}, the set $\C$ in \eqref{eq:constraint_set}, a backup set ${\Sb \subseteq \C}$ in~\eqref{eq:backup_set}, a backup controller ${\kb : \X \to \U}$ that renders $\Sb$ forward invariant along~\eqref{eq:f_b}, and the set $\SI$ in \eqref{eq:SI} with ${T > 0}$.
Then, there exist ${\alpha,\alpha_\mathrm{b} \in \K}$ such that:
\begin{equation}
\begin{aligned}
    \dot{\psi}\big(\phib(\theta,\bx),\bu\big) & \!\geq\! -\alpha\big(\psi(\phib(\theta,\bx))\big),
    \ \forall \theta \!\in\! [0,T],\\
    \dot{h}_{\mathrm{b}} \big( \phib(T,\bx),\bu \big) & \!\geq\! -\alpha_{\mathrm{b}} \big( \hb(\phib(T,\bx)) \big)
\label{eq:backup_constr}
\end{aligned}
\end{equation}
holds for ${\bu = \kb(\bx)}$ for all ${\bx \in \SI}$.
Moreover, any locally Lipschitz controller ${\bk :\X \!\to\! \U}$, ${\bu = \bk(\bx)}$, that satisfies \eqref{eq:backup_constr} for all ${\bx \in \SI}$ renders ${\SI \subseteq \C}$ forward invariant along \eqref{eq:closedloop}.}
\label{th:2}
\end{theorem}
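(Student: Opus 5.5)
The plan has two parts: first show that the backup controller itself meets \eqref{eq:backup_constr}, then show that any controller meeting \eqref{eq:backup_constr} keeps $\SI$ invariant.

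\textbf{First part.} The key identity is the semigroup property of the backup flow. Along \eqref{eq:f_b}, $\phib(\theta,\phib(\tau,\bx))=\phib(\theta+\tau,\bx)$. Differentiating in $\tau$ at $\tau=0$ gives
\begin{equation*}
\frac{\partial \phib(\theta,\bx)}{\partial \bx}\big(\bf(\bx)+\bg(\bx)\kb(\bx)\big)=\frac{\partial \phib(\theta,\bx)}{\partial \theta}.
\end{equation*}
Here $\dot\psi(\phib(\theta,\bx),\bu)$ is understood as the time derivative of $\psi(\phib(\theta,\bx(t)))$ along \eqref{eq:control_system}, i.e.
\begin{equation*}
\nabla\psi(\phib)\,\frac{\partial\phib}{\partial\bx}\,(\bf+\bg\bu).
\end{equation*}
With $\bu=\kb(\bx)$ this becomes $\frac{d}{d\theta}\psi(\phib(\theta,\bx))$, and likewise for $\hb$ at $\theta=T$. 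So it suffices to find $\alpha,\alpha_\mathrm{b}$ with
\begin{equation*}
\tfrac{d}{d\theta}\psi(\phib(\theta,\bx))\ge-\alpha\big(\psi(\phib(\theta,\bx))\big)
\end{equation*}
on $[0,T]\times\SI$, and the analogous bound for $\hb$ at $T$.
\begin{itemize}
\item \emph{Backup-set condition.} Since $\kb$ renders $\Sb$ forward invariant and $\phib(T,\bx)\in\Sb$, a standard converse (Nagumo-type) argument gives $\alpha_\mathrm{b}\in\K$ with $\dot h_\mathrm{b}\ge-\alpha_\mathrm{b}(\hb)$ on $\Sb$. This uses compactness of $\Sb$, or at least a bounded-derivative assumption, so that the minimum of $\dot h_\mathrm{b}$ over the sublevel sets $\{0\le \hb\le r\}$ defines a continuous function that is nonpositive at $r=0$, and can be dominated by a class-$\K$ function.
\item \emph{Trajectory condition.} For $\psi$ one defines, for $r\ge0$,
\begin{equation*}
\beta(r)=-\inf\Big\{\tfrac{d}{d\theta}\psi(\phib(\theta,\bx)) : \bx\in\SI,\ \theta\in[0,T],\ \psi(\phib(\theta,\bx))\le r\Big\}.
\end{equation*}
One then shows $\beta(0)\le0$ and bounds $\beta$ by a class-$\K$ $\alpha$.
\end{itemize}

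\textbf{Main obstacle.} The hard step is showing $\beta(0)\le0$. If $\psi(\phib(\theta,\bx))=0$ with $\theta\in(0,T)$, then $\theta$ is an interior minimum of $\theta\mapsto\psi(\phib(\theta,\bx))\ge0$, so the derivative vanishes. At $\theta=0$ the derivative must be $\ge0$. At $\theta=T$, the state lies in $\Sb\subseteq\C$ and invariance of $\Sb$ gives a nonnegative one-sided derivative. Compactness (of $\SI$, or of the relevant region) is needed to pass from pointwise to uniform bounds. I would state this as a standing regularity assumption, as in \cite{molnar2023safety}, and choose $\alpha$ large enough, e.g.\ $\alpha(r)=\max\{\beta(r),0\}+r$ after making it strictly increasing.

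\textbf{Second part.} Fix $\theta$ and set $h_\theta(\bx)=\psi(\phib(\theta,\bx))$ and $h_T(\bx)=\hb(\phib(T,\bx))$. Then $\SI=\bigcap_{\theta\in[0,T]}\{h_\theta\ge0\}\cap\{h_T\ge0\}$. Along \eqref{eq:closedloop}, condition \eqref{eq:backup_constr} is exactly $\dot h_\theta\ge-\alpha(h_\theta)$ and $\dot h_T\ge-\alpha_\mathrm{b}(h_T)$ for states in $\SI$. By a comparison-lemma argument for each fixed $\theta$, as in Theorem~\ref{th1}, each superlevel set is preserved as long as the trajectory stays in $\SI$. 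A first-exit-time contradiction then handles the infinite intersection: let $t^*$ be the exit time and use continuity of $(\theta,\bx)\mapsto h_\theta(\bx)$ on compact $[0,T]$. This yields forward invariance of $\SI$. Finally, $\SI\subseteq\C$ follows by taking $\theta=0$, since $\phib(0,\bx)=\bx$.
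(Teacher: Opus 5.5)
The paper does not prove this theorem. It is imported from \cite{molnar2023safety} and used only as a black box, so your proposal has to stand on its own.

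Your first part is sound and follows the standard route, the same one the paper points to for Lemma~\ref{lemma:3}. The sensitivity identity $\frac{\partial\phib}{\partial\bx}(\mathbf{f}+\mathbf{g}\kb)(\bx)=\frac{\partial\phib}{\partial\theta}$ turns both constraints at $\bu=\kb(\bx)$ into derivatives along the backup flow. The interior-minimum and endpoint analysis then gives $\beta(0)\le0$, and the analogous fact holds for $\hb$ on $\partial\Sb$. You are also right that a compactness (or uniformity) hypothesis, absent from the statement, is genuinely needed to obtain a single pair $\alpha,\alpha_{\rm b}\in\K$.

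\textbf{The gap is in the second part.}

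\emph{Why the exit-time argument fails.} Saying that each $\{h_\theta\ge0\}$ is preserved ``as long as the trajectory stays in $\SI$'' is vacuous. The first-exit-time argument does not close, because \eqref{eq:backup_constr} is assumed only on $\SI$. After the exit time $t^*$ there is no differential inequality to feed into the comparison lemma. At $t^*$ itself you only get $\dot h_\theta(\bx(t^*))\ge-\alpha(0)=0$ for the active $\theta$'s. That is a non-strict first-order condition, and it does not prevent $h_\theta(\bx(t))<0$ for $t$ just after $t^*$. Continuity in $\theta$ on $[0,T]$ only controls the inactive constraints.

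\emph{Why Theorem~\ref{th1} does not help.} Invoking it for each fixed $\theta$ fails for two reasons. That result leans on $\psi$ being a CBF: the strict inequality forces $\nabla\psi\neq\bzero$ on the boundary, which is what makes a Nagumo-type argument work. It also needs the inequality on all of $\{\psi\ge0\}$. Here $h_\theta$ has neither property.

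\emph{Your ingredients cannot suffice.} Take $\mathbf{f}\equiv\bzero$, $\mathbf{g}=\mathbf{I}$, $\kb\equiv\bzero$, $\psi=\hb=-\|\bx\|^2$, and $\mathbf{e}_1\in\U$. Then $\Sb=\C=\SI=\{\bzero\}$, which is compact and invariant under $\kb$. At $\bx=\bzero$ both inequalities in \eqref{eq:backup_constr} read $0\ge0$ for every $\bu$. Yet the constant controller $\bk(\bx)=\mathbf{e}_1$ leaves $\SI$ immediately.

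\textbf{How to repair it.} The statement carries an implicit regularity hypothesis, and your proof must name and use one.

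The simplest choice is to require \eqref{eq:backup_constr} on an open set containing $\SI$, with $\alpha,\alpha_{\rm b}\in\Ke$ locally Lipschitz; this is consistent with the $\Ke$ used in Theorem~\ref{th:4}. The argument then runs as follows. The trajectory $\bx(t)$ stays in that open set on some interval $[t^*,t^*+\delta)$. Apply the comparison lemma separately to each $\psi(\phib(\theta,\cdot))$ and to $\hb(\phib(T,\cdot))$, each starting from a nonnegative value at $t^*$. This gives nonnegativity on the whole interval, with $\delta$ independent of $\theta$. That contradicts the definition of $t^*$ directly, and no compactness in $\theta$ is needed.

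Alternatively, impose a constraint qualification on the active gradients at boundary points of $\SI$. Then its tangent cone coincides with the linearized cone, and Nagumo's theorem applies.
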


For control synthesis, the CBF-QP in~\eqref{eq:QP} is augmented with predictive constraints, leading to the {\em backup CBF-QP}:
\begin{equation}
\begin{aligned}
\mathbf{k}(\bx) & = \underset{{\bu\in\U} }{\operatorname{argmin}} \quad  \|\bu-\mathbf{k}_{\mathrm{d}}(\bx)\|^2 \\
\textrm{s.t.} \quad & \dot{\psi} \big( \phib(\theta,\bx),\bu \big) \geq -\alpha\big(\psi(\phib(\theta,\bx))\big)\,, \
\forall \theta \!\in\! [0,T]\,, \\
&\dot{h}_{\mathrm{b}} \big( \phib(T,\bx),\bu \big) \geq -\alpha_{\mathrm{b}}\big(\hb(\phib(T,\bx))\big)\,.
\label{eq:QP2}
\end{aligned}
\end{equation}
This QP is feasible even under input constraints with appropriately chosen $\alpha$, $\alpha_{\mathrm{b}}$ and it enforces state constraints (safety) on the basis of Theorem \ref{th:2}.
In practice, the constraints in the QP~\eqref{eq:QP2} are discretized into $N_{\rm c}$ constraints as they can only be evaluated at discrete $\theta$ time instants along the backup flow.
As for the backup flow, it can be computed via forward integration of the backup system~\eqref{eq:f_b} using an ordinary differential equation (ODE) solver or, when \eqref{eq:f_b} is linear, it can be solved analytically; see later in Remark~\ref{remark:linear}.

\section{Multiple Backup CBFs}
\label{sec:multiple}

Before addressing mixed state-input constraints, we extend the backup CBF method to handle multiple state constraints that are decoupled from input constraints.
Inspired by \cite{Kobackup}, we incorporate multiple constraints using a single backup set--backup controller pair.
This complements \cite{rabiee2025soft} where a single constraint and multiple backup controllers were used.
    
Hereby, we introduce $\Ns$ constraint functions  ${\psi_j: \X \to \R}$ for ${j \in \J=\{1, \dots, \Ns\}}$.
These define the constraint set: 
\begin{equation}
    \C = \big\{ \bx \in \X :\psi_j(\bx) \geq 0,\ \forall j \in \J \big\}.
    \label{eq:global_set}
\end{equation}
We consider a single backup set ${\Sb \subseteq \C}$ that respects all state constraints, and a backup controller ${\kb: \X \to \U}$ rendering it forward invariant.
Then, we redefine the invariant set in~\eqref{eq:SI}:
\begin{equation}
    \SI \!=\! \left\{ \bx \in \X : 
    \begin{array}{l}
    \psi_j \big( \phib(\theta, \bx) \big) \geq 0,\; \forall j \!\in\! \J,\; \forall \theta \!\in\! [0, T], \\
    h_{\rm b}\big(\phib(T, \bx)\big) \geq 0
    \end{array}
    \!\!\right\},
    \label{eq:SI_g}
\end{equation}
indicating the set of states from which the backup flow $\phib(\theta, \bx)$ satisfies all $\Ns$ state constraints for the duration ${\theta \in [0,T]}$ and terminates in $\Sb$ at ${\theta=T}$.
As a consequence of Theorem~\ref{theo:backup}, we can state the following corollary.

\begin{corollary}
\textit{Consider the system \eqref{eq:control_system}, the set $\C$ in \eqref{eq:global_set}, a backup set ${\Sb \subseteq \C}$ in~\eqref{eq:backup_set}, a backup controller ${\kb : \X \to \U}$ that renders $\Sb$ forward invariant along~\eqref{eq:f_b}, and the set $\SI$ in \eqref{eq:SI_g} with ${T > 0}$.
Then, there exist ${\alpha_j,\alpha_\mathrm{b} \in \K}$ such that:
\begin{equation}
\begin{aligned}
    \dot{\psi}_j\big(\phib(\theta, \bx), \bu\big) &\geq -\alpha_j\big(\psi_j(\phib(\theta, \bx))\big), \forall \theta \!\in\! [0, T], \forall j \!\in\! \J, \\
    \dot{h}_{\mathrm{b}}\big(\phib(T, \bx), \bu\big) &\geq -\alpha_{\mathrm{b}}\big(h_{\mathrm{b}}(\phib(T, \bx))\big)
\label{eq:backup_constr_g}
\end{aligned}
\end{equation}
holds for ${\bu = \kb(\bx)}$ for all ${\bx \in \SI}$.
Moreover, any locally Lipschitz controller ${\bk :\X \!\to\! \U}$, ${\bu = \bk(\bx)}$, that satisfies \eqref{eq:backup_constr_g} for all ${\bx \in \SI}$ renders ${\SI \subseteq \C}$ forward invariant along \eqref{eq:closedloop}.}
\label{th:3}
\end{corollary}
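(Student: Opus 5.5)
The plan is to reduce Corollary~\ref{th:3} to the single-constraint setting of Theorem~\ref{theo:backup} by repeating its argument constraint by constraint. The backup set, backup controller and backup flow $\phib$ are common to all $j \in \J$. Hence each constraint function $\psi_j$ can be treated exactly as the single $\psi$ was treated there. The only new ingredient is the intersection structure of $\SI$ in \eqref{eq:SI_g}.

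\textbf{Existence of $\alpha_j,\alpha_\mathrm{b}$.} For $\bu=\kb(\bx)$, the semigroup property gives $\phib(\theta+s,\bx)=\phib(s,\phib(\theta,\bx))$. Differentiating in $s$ at $s=0$ yields
\begin{equation*}
\dot\psi_j\big(\phib(\theta,\bx),\kb(\bx)\big)=\tfrac{\partial}{\partial\theta}\psi_j\big(\phib(\theta,\bx)\big)
\end{equation*}
and the analogous identity for $\hb$ at $\theta=T$. Strictly, the flow derivative involves $\kb(\phib(\theta,\bx))$ rather than $\kb(\bx)$, and one must read the left-hand side as in Theorem~\ref{theo:backup}. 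The cleanest route is to invoke that theorem directly. For each fixed $j$, define the single-constraint set
\begin{equation*}
\SI^j=\{\bx:\psi_j(\phib(\theta,\bx))\ge0\ \forall\theta\in[0,T],\ \hb(\phib(T,\bx))\ge0\}.
\end{equation*}
Here $\Sb\subseteq\C\subseteq\{\psi_j\ge0\}$, so the hypotheses of Theorem~\ref{theo:backup} hold with $\psi=\psi_j$. That theorem supplies $\alpha_j,\alpha_\mathrm{b}^j\in\K$ for which \eqref{eq:backup_constr} holds with $\bu=\kb(\bx)$ on $\SI^j$. Since $\SI=\bigcap_j\SI^j\subseteq\SI^j$, each inequality holds on $\SI$. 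For the terminal constraint, set $\alpha_\mathrm{b}(r)=\max_j\alpha_\mathrm{b}^j(r)$. This is again of class-$\K$, since a finite maximum of continuous, strictly increasing functions vanishing at $0$ keeps these properties. It makes the single $\hb$ inequality hold for all $j$ simultaneously, because a larger $\alpha_\mathrm{b}$ only weakens the constraint where $\hb\ge0$.

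\textbf{Forward invariance.} Let $\bk$ satisfy \eqref{eq:backup_constr_g} on $\SI$. I would follow the proof of Theorem~\ref{theo:backup}. Define $H_j(\bx)=\min_{\theta\in[0,T]}\psi_j(\phib(\theta,\bx))$ and $H_\mathrm{b}(\bx)=\hb(\phib(T,\bx))$. Then $\SI=\{\min(\min_j H_j,H_\mathrm{b})\ge0\}$ is a finite intersection of sets. The constraints in \eqref{eq:backup_constr_g} for index $j$ are exactly those of \eqref{eq:backup_constr} for $\psi_j$. The argument of Theorem~\ref{theo:backup} shows that along closed-loop trajectories, each constraint $\psi_j(\phib(\theta,\bx(t)))\ge0$ and $H_\mathrm{b}(\bx(t))\ge0$ cannot be violated first while $\bx(t)$ remains in $\SI$. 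That argument is a Nagumo-type barrier argument applied to each $\theta$-indexed constraint. Since a trajectory leaving the intersection must first violate at least one of finitely many constraints, a first-exit-time contradiction gives invariance of $\SI$. Finally, taking $\theta=0$ gives $\phib(0,\bx)=\bx$, so $\psi_j(\bx)\ge0$ for all $j$, hence $\SI\subseteq\C$.

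\textbf{Main obstacle.} The delicate step is the second one. Theorem~\ref{theo:backup} is stated for one $\psi$, and its invariance conclusion applies to $\SI^j$ only when $\bk$ satisfies the constraints on all of $\SI^j$. Here, however, $\bk$ is assumed to satisfy them only on the smaller set $\SI$. So the theorem cannot be applied as a black box. Instead, I would rerun its proof locally: its invariance argument only uses the constraints at boundary points of the set, and the boundary of an intersection lies in the union of boundaries of $\SI^j$ restricted to $\SI$. This localization, handled through the first exit time, is what I expect to require care.
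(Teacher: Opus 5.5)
Your existence part is the paper's argument: apply Theorem~\ref{theo:backup} to each $\psi_j$ on $\SI^j$, then restrict to $\SI=\bigcap_j\SI^j\subseteq\SI^j$. Your choice $\alpha_\mathrm{b}=\max_j\alpha_\mathrm{b}^j$ makes explicit a detail the paper skips, and it is valid because $\hb(\phib(T,\bx))\ge0$ on $\SI$.

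For invariance, the paper takes exactly the black-box step you reject. It uses Theorem~\ref{theo:backup} to get invariance of each $\SI^j$, then uses that intersections of forward invariant sets are forward invariant. Your objection is correct: that step needs the $j$-th inequalities on all of $\SI^j$, while the corollary assumes them only on $\SI$. So the paper's proof is complete only under that stronger reading of the hypothesis.

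The gap is that your substitute does not fix this. A first-exit-time argument cannot run on information from $\SI$ alone. At the exit time $t^*$ you know only that the active constraints vanish with nonnegative derivative. For $t>t^*$ the state is outside $\SI$, where no inequality is assumed, so the comparison argument stops at $t^*$. At an intersection, boundary information of this kind does not imply invariance. Take $g_{1,2}(\bx)=1-(x_1\mp1)^2-x_2^2$ and $\dot\bx=[0~1]^\top$: both $\dot g_i\ge-\alpha(g_i)$ hold on $\{g_1\ge0\}\cap\{g_2\ge0\}=\{\bzero\}$, yet the trajectory leaves at once. So your claim that the invariance argument "only uses the constraints at boundary points" fails without extra regularity. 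Passing to the finitely many nonsmooth functions $\min_{\theta}\psi_j(\phib(\theta,\cdot))$ does not help either, since nothing bounds their Dini derivatives outside $\SI$.

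You need an extra ingredient. Either of these would work:
\begin{itemize}
\item Assume the inequalities, with $\alpha_j,\alpha_\mathrm{b}\in\Ke$, on an open set containing $\SI$. Then for every pair $(j,\theta)$ the comparison argument extends past $t^*$.
\item Use a Nagumo argument under a constraint qualification on the boundary of $\SI$, ensuring that the active linearized constraints cut out the tangent cone of $\SI$.
\end{itemize}
Alternatively, note that Theorem~\ref{theo:backup} is itself an intersection over $\theta\in[0,T]$ with the inequalities assumed only on $\SI$. Whatever proof justifies it therefore carries over verbatim to the compact index set $\J\times[0,T]$, with the terminal constraint common to all $j$. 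That is the precise content of ``rerun the proof,'' and it inherits whatever regularity that proof relies on.
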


\begin{proof}
    For each individual constraint ${j \in \J}$, let us define the 
    constraint set
    ${\C_j = \{ \bx \in \X : \psi_j(\bx) \geq 0\}}$ and
    invariant set:
    \begin{equation}
        \SI^j = \left\{ \bx \!\in\! \X : 
        \begin{array}{l}
        \psi_j\big(\phib(\theta, \bx)\big) \geq 0,\ \forall \theta \in [0, T]\,, \\
        h_{\rm b}\big(\phib(T, \bx)\big) \geq 0
        \end{array}
        \!\!\right\}.
        \label{eq:SI_j}
    \end{equation}
    By definition of $\C$ in~\eqref{eq:global_set} and $\SI$ in~\eqref{eq:SI_g}, ${\bx\in\SI}$ if and only if
    ${\phib(\theta, \bx) \!\in\! \C_j}$,
    ${\forall \theta \!\in\! [0, T]}$,
    ${\forall j \!\in\! \J}$, while
    ${\phib(T, \bx) \!\in\! \Sb}$.
    This means ${\bx \in \SI \iff \bx \in \SI^j}$, ${\forall j \in \J}$,
    thus ${\SI = \bigcap_{j=1}^{\Ns} \SI^j}$.
    
    Based on Theorem~\ref{theo:backup}, for each individual constraint ${j \in \J}$ the backup controller ${\bu = \kb(\bx)}$ satisfies~\eqref{eq:backup_constr} with $\psi_j$ for all ${\bx \in \SI^j}$.
    Hence the backup controller $\bu = \kb(\bx)$ satisfies the combined constraint \eqref{eq:backup_constr_g} when ${\bx \in \SI}$.
    Note that this requires that the constraints share the same backup set and backup controller.
    Furthermore, based on Theorem~\ref{theo:backup}, for each constraint ${j \in \J}$ a locally Lipschitz controller $\bk$ satisfying~\eqref{eq:backup_constr} with $\psi_j$ for all ${\bx \in \SI^j}$ renders $\SI^j$ forward invariant.
    Because the intersection of forward invariant sets is forward invariant under the same control law, $\SI$ is forward invariant along \eqref{eq:closedloop} for controllers satisfying~\eqref{eq:backup_constr_g}.
    Finally, since ${\phib(0, \bx) = \bx}$, ${\bx \in \SI}$ implies ${\bx \in \C}$.
\end{proof}

\section{Mixed State-Input Constraints}
\label{sec:mixed}

In this section, we establish our main result: a projection-based framework to handle mixed state-input constraints. We first introduce this concept for a single constraint and then extend it to the multi-constraint case.

\subsection{Projection with Single Constraint}
\label{sec:mixed_single}

We define the mixed state-input constraint set $\D$ via the smooth function $H: \X \times \Rm \to \R$:
\begin{equation}
    \D = \big\{ (\bx, \bu) \in \X \times \Rm : H(\bx, \bu) \geq 0 \big\}.
    \label{eq:mixed_set}
\end{equation}
Our objective is to find a locally Lipschitz controller $\bk$ such that the closed-loop system~\eqref{eq:closedloop} satisfies
${\big(\bx(t), \bk(\bx(t))\big) \in \D}$,
or equivalently
${H \big( \bx(t), \bk(\bx(t)) \big) \geq 0}$,
for all ${t \geq 0}$
(at least for a certain set of initial conditions to be discussed later).
For now, we focus on a single mixed constraint and omit other state, input, or mixed constraints---these will be addressed in the next subsection. 
Unlike pure state constraints, $H(\bx, \bu)$ has \textit{relative degree zero} because the input $\bu$ appears explicitly. While one could be tempted to directly enforce the constraint ${H(\bx, \bu) \geq 0}$ 
in the optimization, this may not be recursively feasible. To ensure recursive feasibility for mixed state-input constraints, we extend the backup CBF method.

Our key idea is to project the constraint from the state-input space ${\X \times \Rm}$ to the state space $\X$ along the backup controller ${\bu = \kb(\bx)}$.
We define the {\em projected constraint set}:
\begin{equation}
    \Cp = \big\{ \bx \in \X : (\bx, \kb(\bx)) \in \D \big\} = \{ \bx \in \X : h(\bx) \geq 0 \},
    \label{eq:projected_set}
\end{equation}
with the {\em projected constraint function} ${h : \X \to \R}$:
\begin{equation}
    h(\bx) \triangleq H\big(\bx, \kb(\bx)\big)\,,
    \label{eq:h_p}
\end{equation}
which is smooth because both $H$ and $\kb$ are smooth.

We require the following properties for the backup set--backup controller pair.
\begin{assumption} \label{assum:backup}
    The backup set satisfies ${\Sb \subseteq \Cp}$, while the backup controller ${\kb : \X \to \Rm}$ renders the backup set $\Sb$ forward invariant along \eqref{eq:f_b}.
\end{assumption}
Then, using the backup flow $\phib$ corresponding to \eqref{eq:f_b}, we define the invariant set:
\begin{equation}
    \SIp = \left\{ \bx \in \Rn : 
    \begin{array}{l}
    h\big(\phib(\theta, \bx)\big) \geq 0, \ \forall \theta \in [0, T], \\
    h_{\rm b}\big(\phib(T, \bx)\big) \geq 0
    \end{array}
    \right\}.
    \label{eq:SI_Cp}
\end{equation}
We establish that the mixed state-input constraint is satisfied when using the backup controller inside this set.

\begin{lemma} \label{lemma:1}
\textit{
If $\bx \in \SI^{\rm p}$, then $\big(\bx, \kb(\bx)\big) \in \D$.
}
\end{lemma}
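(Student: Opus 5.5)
The plan is to evaluate the defining condition of $\SIp$ in \eqref{eq:SI_Cp} at the single time instant $\theta = 0$ and then unwind the definitions of $h$ and $\Cp$. No invariance argument and no use of Theorem~\ref{theo:backup} is needed, because the claim concerns only the current state paired with the backup input.

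First, take $\bx \in \SIp$. By \eqref{eq:SI_Cp}, we have $h\big(\phib(\theta,\bx)\big) \geq 0$ for all $\theta \in [0,T]$. Since $T>0$, the instant $\theta = 0$ belongs to this interval, which gives $h\big(\phib(0,\bx)\big) \geq 0$.

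Second, the backup flow is the solution of \eqref{eq:f_b} starting from $\bx$, so $\phib(0,\bx) = \bx$. This is the same fact used at the end of the proof of Corollary~\ref{th:3}. Hence $h(\bx) \geq 0$, that is, $\bx \in \Cp$.

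Finally, by \eqref{eq:h_p}, $h(\bx) = H\big(\bx,\kb(\bx)\big)$. The inequality $H\big(\bx,\kb(\bx)\big) \geq 0$ is exactly the statement $\big(\bx,\kb(\bx)\big) \in \D$ by \eqref{eq:mixed_set}; equivalently, it follows from the first equality in \eqref{eq:projected_set}.

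There is no substantive obstacle. The only point to state explicitly is the initial-condition identity $\phib(0,\bx)=\bx$, which relies on the backup flow being well defined, guaranteed since $\bf$, $\bg$, $\kb$ are smooth. It is also worth noting that Assumption~\ref{assum:backup} and the terminal condition $\hb\big(\phib(T,\bx)\big)\geq 0$ are not used here. Those become relevant only later, when showing that $\SIp$ can be rendered forward invariant so that the lemma applies for all time.
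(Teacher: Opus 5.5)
Your proposal is correct and matches the paper's proof essentially step for step. You evaluate the defining condition of $\SIp$ at $\theta=0$, use $\phib(0,\bx)=\bx$ to get $h(\bx)\geq 0$, and unwind $h(\bx)=H\big(\bx,\kb(\bx)\big)$ to conclude $\big(\bx,\kb(\bx)\big)\in\D$; your remark that neither Assumption~\ref{assum:backup} nor the terminal condition is used here is also accurate.
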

\begin{proof}
By the definition of $\SIp$ in \eqref{eq:SI_Cp}, ${\bx \in \SIp}$ implies ${h\big(\phib(0, \bx)\big) \geq 0}$. Since the backup flow at ${\theta=0}$ is ${\phib(0,\bx) = \bx}$, it follows that $h(\bx) \geq 0$. By the definition of $h(\bx)$ in \eqref{eq:h_p}, this is equivalent to $\big(\bx, \kb(\bx)\big) \in \D$.
\end{proof}

\begin{remark}
It is important to highlight that $\bx \in \SI^{\rm p}$ does not inherently yield ${\big(\bx, \bk(\bx)\big) \in \D}$ for an arbitrary controller $\bk$.
Thus, keeping $\bx$ inside $\SIp$ does not guarantee safety on its own.
However, $\SIp$ helps to establish recursive feasibility because $\kb$ is a controller that renders $\SIp$ forward invariant.
\end{remark}

\begin{lemma} \label{lemma:2}
\textit{
The backup controller $\kb$ renders the set $\SI^{\rm p}$ in \eqref{eq:SI_Cp} forward invariant along~\eqref{eq:f_b}.
}
\end{lemma}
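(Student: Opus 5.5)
The plan is to use the semigroup (flow) property of the backup flow, $\phib(\theta,\phib(t,\bx)) = \phib(\theta+t,\bx)$ for all $\theta,t\geq 0$. This holds because \eqref{eq:f_b} is autonomous and $\kb$ is smooth, so solutions are unique. Fix $\bx\in\SIp$ and $t\geq 0$, and set $\mathbf{y}=\phib(t,\bx)$, the state reached along \eqref{eq:f_b}. The goal is to show that $\mathbf{y}\in\SIp$, i.e.\ that both defining conditions in \eqref{eq:SI_Cp} hold at $\mathbf{y}$.

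Consider first the terminal condition. By the flow property, $\phib(T,\mathbf{y})=\phib(T+t,\bx)=\phib(t,\phib(T,\bx))$. Since $\phib(T,\bx)\in\Sb$ and $\Sb$ is forward invariant under $\kb$ by Assumption~\ref{assum:backup}, we get $\phib(T,\mathbf{y})\in\Sb$, that is, $\hb(\phib(T,\mathbf{y}))\geq 0$. The same argument shows that $\phib(s,\bx)\in\Sb$ for every $s\geq T$.

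For the path condition, take $\theta\in[0,T]$. Then $h(\phib(\theta,\mathbf{y}))=h(\phib(\theta+t,\bx))$, and we split into two cases. If $\theta+t\leq T$, nonnegativity follows directly from $\bx\in\SIp$. If $\theta+t>T$, then $\phib(\theta+t,\bx)\in\Sb\subseteq\Cp$ by the previous step and Assumption~\ref{assum:backup}, so $h\geq 0$ there as well. Hence $h(\phib(\theta,\mathbf{y}))\geq 0$ for all $\theta\in[0,T]$, and $\mathbf{y}\in\SIp$.

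The argument is essentially the standard invariance argument behind Theorem~\ref{theo:backup}, with $\psi$ replaced by $h$ and $\C$ replaced by $\Cp$. The only point needing care is the case $\theta+t>T$. There one must use $\Sb\subseteq\Cp$ from Assumption~\ref{assum:backup}, rather than $\Sb\subseteq\C$, together with the forward invariance of $\Sb$. One must also make sure the backup flow exists for all forward time, which I would note follows once the trajectory enters the invariant set $\Sb$. Otherwise it can be assumed, as is implicit in the definition of $\SIp$.
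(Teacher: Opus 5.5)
Your proof is correct and follows essentially the same argument as the paper: the semigroup property of the backup flow, the case split $t+\theta\leq T$ versus $t+\theta>T$, and forward invariance of $\Sb$ combined with $\Sb\subseteq\Cp$ from Assumption~\ref{assum:backup}. Your remark on existence of the flow for all forward time is a harmless refinement that the paper leaves implicit.
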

\begin{proof}
Let ${\bx_0 \in \SI^{\rm p}}$.
To show forward invariance, we establish that
${\phib(t,\bx_0) \in \SI^{\rm p}}$ for all ${t \geq 0}$,
which is equivalent to\footnote{Observe that the flow satisfies
${\phib\big(\theta, \phib(t, \bx_0)\big) = \phib(t+\theta, \bx_0)}$  for any ${t \geq 0}$ and ${\theta \in [0,T]}$.}
${\phib(t+\theta,\bx_0) \in \Cp}$
and
${\phib(t+T,\bx_0) \in \Sb}$
for all ${t \geq 0}$ and ${\theta \in [0,T]}$.
%
%
Because ${\bx_0 \in \SI^{\rm p}}$, we have
${\phib(t+\theta, \bx_0) \in \Cp}$ for ${t + \theta \leq T}$.
Furthermore, because ${\bx_0 \in \SI^{\rm p}}$, we also have that ${\phib(T, \bx_0) \in \Sb}$.
Since $\Sb$ is forward invariant under $\kb$, this means 
${\phib(t+T, \bx_0) \in \Sb}$ for all ${t \geq 0}$, and also
${\phib(t+\theta, \bx_0) \in \Sb \subseteq \Cp}$ for any ${t+\theta > T}$.
\end{proof}



We now link the forward invariance of $\SI^{\rm p}$ to the corresponding conditions satisfied by the backup control input.
\begin{lemma} \label{lemma:3}
\textit{
There exist ${\alpha, \alpha_{\rm b} \in \K}$ such that ${\bu = \kb(\bx)}$ satisfies the following constraints for all ${\bx \in \SI^{\rm p}}$:
\begin{align}
\begin{aligned}
    \dot{h}\big(\phib(\theta, \bx), \bu\big) &\geq -\alpha\big(h(\phib(\theta, \bx))\big), \quad \forall \theta \in [0, T]\,, \\
    \dot{h}_{\rm b}\big(\phib(T, \bx), \bu\big) &\geq -\alpha_{\rm b}\big(h_{\rm b}(\phib(T, \bx))\big).
    \end{aligned}
\end{align}
}
\end{lemma}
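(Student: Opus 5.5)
The cleanest route is to reduce Lemma~\ref{lemma:3} to Theorem~\ref{theo:backup}, with the projected constraint function $h$ in place of $\psi$ and $\Cp$ in place of $\C$. The hypotheses of Theorem~\ref{theo:backup} are all available. The function $h$ in \eqref{eq:h_p} is smooth, since it is a composition of the smooth maps $H$ and $\kb$. By Assumption~\ref{assum:backup}, $\Sb \subseteq \Cp$ and $\kb$ renders $\Sb$ forward invariant along \eqref{eq:f_b}. Finally, the set $\SIp$ in \eqref{eq:SI_Cp} is exactly the set $\SI$ in \eqref{eq:SI} with $\psi$ replaced by $h$. The first part of Theorem~\ref{theo:backup} then yields $\alpha,\alpha_{\rm b}\in\K$ for which the inequalities of the lemma hold with $\bu=\kb(\bx)$ for all $\bx\in\SIp$.

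One technical point needs care. Theorem~\ref{theo:backup} is stated for $\kb : \X \to \U$, whereas Assumption~\ref{assum:backup} only gives $\kb : \X \to \Rm$. I will observe that the first part of that theorem does not actually use the input constraint, because only the value $\bu=\kb(\bx)$ is substituted. For transparency I will also sketch the direct argument that underlies it.

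\textbf{Step 1.} For $\bx\in\SIp$ and $\theta\in[0,T]$, set $\mathbf{y}=\phib(\theta,\bx)$. The key identity is
\[
\dot h\big(\phib(\theta,\bx),\kb(\bx)\big)=\nabla h(\mathbf{y})\,\frac{\partial\phib}{\partial\bx}(\theta,\bx)\big(\bf(\bx)+\bg(\bx)\kb(\bx)\big)=\frac{d}{d\theta}h\big(\phib(\theta,\bx)\big).
\]
This holds because the sensitivity matrix transports the vector field along the flow: $\frac{\partial\phib}{\partial\bx}(\theta,\bx)\,\bf_{\rm b}(\bx)=\bf_{\rm b}(\phib(\theta,\bx))$, where $\bf_{\rm b}$ denotes the right-hand side of \eqref{eq:f_b}. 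This is also the sense in which $\dot h(\phib(\theta,\bx),\bu)$ must be read in the backup CBF constraints, namely as the time derivative of $h(\phib(\theta,\bx(t)))$ along \eqref{eq:control_system}. The same identity holds for $\hb$ at $\theta=T$.

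\textbf{Step 2.} Construct the class-$\K$ functions. By Lemma~\ref{lemma:2}, $\SIp$ is forward invariant under $\kb$, so $\theta\mapsto h(\phib(\theta,\bx))$ stays nonnegative, and $\hb(\phib(T,\bx))\ge0$ with $\Sb$ invariant. Define
\[
\gamma(r)=\sup\big\{-\tfrac{d}{d\theta}h(\phib(\theta,\bx)) : \bx\in\SIp,\ \theta\in[0,T],\ h(\phib(\theta,\bx))\le r\big\}.
\]
Next show that $\gamma(0)\le0$. At a point where $h=0$ along a trajectory that stays in $h\ge0$, the derivative cannot be negative. Then choose $\alpha\in\K$ dominating $\gamma$; do the same for $\alpha_{\rm b}$ using the invariance of $\Sb$.

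\textbf{Main obstacle.} The hard part is showing that $\gamma$ is finite and can be bounded by a class-$\K$ function. Finiteness needs compactness of $\SIp$, or at least local boundedness of the derivatives. Continuity at $0$ is also delicate. I would handle this the same way it is handled in the proof of Theorem~\ref{theo:backup} in \cite{molnar2023safety}. If needed, I would restrict to a compact subset, or allow $\alpha$ to be an extended class-$\K$ function on the relevant range, simply inheriting that theorem's conclusion rather than reproving it.
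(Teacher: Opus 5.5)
Your proposal is correct and follows the paper's proof, which likewise obtains the lemma directly from Theorem~\ref{theo:backup} (together with Lemma~\ref{lemma:2}) by replacing $\psi$ with $h$ and deferring to \cite[Lemma~2]{molnar2023safety}. Your additional sketch---the sensitivity identity $\frac{\partial\phib}{\partial\bx}(\theta,\bx)\,\mathbf{f}_{\rm b}(\bx)=\mathbf{f}_{\rm b}(\phib(\theta,\bx))$ and the boundary argument giving $\gamma(0)\le 0$---only unpacks what the paper delegates to that citation, and your clarifications on reading $\dot h(\phib(\theta,\bx),\bu)$ as a derivative along the flow and on $\kb$ mapping into $\Rm$ rather than $\U$ are welcome additions, not departures.
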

\begin{proof}
This is a direct consequence of Lemma~\ref{lemma:2} and the classical backup CBF result in Theorem~\ref{theo:backup}, by replacing $\psi$ with $h$.
The proof follows that of \cite[Lemma~2]{molnar2023safety}.
\end{proof}

These lemmas lead to our main theorem that states conditions for the control input to ensure safety with respect to mixed state-input constraints in a recursively feasible way.

\begin{theorem} \label{th:4}
\textit{
Consider the system~\eqref{eq:control_system},
the set $\D$ in \eqref{eq:mixed_set}, a backup set $\Sb$ in~\eqref{eq:backup_set}, a backup controller ${\kb : \X \to \Rm}$, the projected constraint set $\Cp$ in \eqref{eq:projected_set}, and the set $\SI^{\rm p}$ in \eqref{eq:SI_Cp} with ${T > 0}$.
Let Assumption~\ref{assum:backup} hold.
Then, there exist $\alpha, \alpha_{\rm b} \in \Ke$ such that:
\begin{subequations} \label{eq:cbf_constraints_mixed}
\begin{align}
    \dot{h}\big(\phib(\theta, \bx), \bu\big) &\geq -\alpha\big(h(\phib(\theta, \bx))\big)\,,\ \forall \theta \in [0, T], \label{eq:cbf_flow} \\
    \dot{h}_{\rm b}\big(\phib(T, \bx), \bu\big) &\geq -\alpha_{\rm b}\big(h_{\rm b}(\phib(T, \bx))\big)\,, \label{eq:cbf_term} \\
    H(\bx, \bu) &\geq 0 \label{eq:cbf_mixed}
\end{align}
\end{subequations}
holds for ${\bu = \kb(\bx)}$ for all ${\bx \in \SI^{\rm p}}$.
Moreover, any locally Lipschitz controller ${\bk :\X \to \Rm}$, ${\bu = \bk(\bx)}$, that satisfies \eqref{eq:cbf_constraints_mixed} for all ${\bx \in \SIp}$ renders $\SIp$ forward invariant along \eqref{eq:closedloop}, implying ${\big(\bx(t), \bk(\bx(t))\big) \in \D}$ for all ${t \geq 0}$ if ${\bx_0 \in \SIp}$.
}
\end{theorem}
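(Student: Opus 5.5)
The proof splits into two parts that mirror the statement. The first part is existence: the backup input satisfies all three constraints. The second part is invariance: any admissible controller keeps the state in $\SIp$ and the pair in $\D$. For the first part, I would take the class-$\K$ functions $\alpha,\alpha_{\rm b}$ provided by Lemma~\ref{lemma:3}. Any class-$\K$ function on $[0,a)$ can be extended to an odd, strictly increasing function on $(-a,a)$, which is extended class-$\K$. This extension changes nothing on $\SIp$, since there the arguments $h(\phib(\theta,\bx))$ and $h_{\rm b}(\phib(T,\bx))$ are nonnegative by the definition \eqref{eq:SI_Cp}. So \eqref{eq:cbf_flow} and \eqref{eq:cbf_term} hold for $\bu=\kb(\bx)$ on $\SIp$ by Lemma~\ref{lemma:3}. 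Constraint \eqref{eq:cbf_mixed} for $\bu=\kb(\bx)$ is exactly Lemma~\ref{lemma:1}.

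For the second part, I would observe that $h$ is a smooth state constraint function, by \eqref{eq:h_p}. Also, $\SIp$ is precisely the set \eqref{eq:SI} with $\psi$ replaced by $h$ and $\C$ replaced by $\Cp$, and Assumption~\ref{assum:backup} gives $\Sb\subseteq\Cp$ with $\Sb$ forward invariant under $\kb$. Theorem~\ref{theo:backup} then applies verbatim with $\psi:=h$, with one caveat. That theorem requires $\kb:\X\to\U$, whereas here $\kb$ maps into $\Rm$ and $\U$ plays no role. Its invariance argument never uses the input set, so I would note that we apply it with $\U=\Rm$. Under this application, any locally Lipschitz $\bk$ satisfying \eqref{eq:cbf_flow}--\eqref{eq:cbf_term} on $\SIp$ renders $\SIp$ forward invariant along \eqref{eq:closedloop}.

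The conclusion then follows. If $\bx_0\in\SIp$, invariance gives $\bx(t)\in\SIp$ for all $t\ge0$. Since $\bk$ satisfies \eqref{eq:cbf_mixed} at every point of $\SIp$, evaluating it at $\bx(t)$ gives $H(\bx(t),\bk(\bx(t)))\ge0$, i.e.\ $(\bx(t),\bk(\bx(t)))\in\D$ for all $t\ge0$.

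The main obstacle is the invariance step. I must make sure the invocation of Theorem~\ref{theo:backup} is legitimate with the extended class-$\K$ functions and without an input constraint set. If that needs more care, I would reproduce the argument directly. For each fixed $\theta$, the function $\bx\mapsto h(\phib(\theta,\bx))$ is differentiable, with time derivative along \eqref{eq:closedloop} given by the left-hand side of \eqref{eq:cbf_flow}. Each condition is therefore a standard CBF condition for the superlevel set $\{\bx: h(\phib(\theta,\bx))\ge0\}$, and similarly for $h_{\rm b}(\phib(T,\bx))$. Theorem~\ref{th1} applied to each of these sets gives its forward invariance. Intersecting over $\theta\in[0,T]$ together with the terminal set, exactly as in the proof of Corollary~\ref{th:3}, yields invariance of $\SIp$.
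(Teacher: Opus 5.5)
Your proposal is correct and follows the paper's proof essentially step for step: Lemma~\ref{lemma:3} and Lemma~\ref{lemma:1} give the three conditions for the backup input, Theorem~\ref{theo:backup} with $\psi$ replaced by $h$ gives invariance of $\SIp$, and \eqref{eq:cbf_mixed} is then evaluated along the trajectory; your remarks on extending $\alpha,\alpha_{\rm b}$ to $\Ke$ and on taking $\U=\Rm$ are sensible details that the paper leaves implicit. One small caution, which concerns only your fallback: Theorem~\ref{th1} cannot be applied literally to each set $\{\bx: h(\phib(\theta,\bx))\ge0\}$ separately, because the controller satisfies the conditions only on the smaller set $\SIp$, so a direct argument would have to work on the intersection itself (e.g., via a first-exit-time and comparison argument).
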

\begin{proof}
By Lemma \ref{lemma:3}, ${\bu = \kb(\bx)}$ satisfies \eqref{eq:cbf_flow} and \eqref{eq:cbf_term}, and by Lemma \ref{lemma:1}, ${\bu = \kb(\bx)}$ satisfies \eqref{eq:cbf_mixed}. 
For ${\bu = \bk(\bx)}$, the constraints in \eqref{eq:cbf_flow} and \eqref{eq:cbf_term} ensure the forward invariance of $\SI^{\rm p}$ based on Theorem~\ref{theo:backup}.
This means that ${\bx_0 \in \SIp}$ implies ${\bx(t) \in \SIp}$ for all ${t \geq 0}$.
Finally, the constraint in~\eqref{eq:cbf_mixed} directly enforces that the mixed state-input safety condition $\big(\bx(t), \bk(\bx(t))\big) \in \D$ is satisfied $\forall t\geq 0$.
\end{proof}

\begin{remark}
The constraints~\eqref{eq:cbf_flow} and~\eqref{eq:cbf_term} are required for guaranteeing {\em recursive feasibility}.
They enforce
${\bx(t) \in \SIp}$,
thereby restricting the system to visit only those states where there exists at least one controller (the backup controller) satisfying the mixed constraints.
The constraint~\eqref{eq:cbf_mixed} is required for {\em safety}, that is, for satisfying the mixed constraints with the chosen control law, such that ${\big( \bx(t), \bk(\bx(t)) \big) \in \D}$.
\end{remark}

\subsection{Projection with Multiple Constraints}
By combining the ideas from Section~\ref{sec:multiple} and Section~\ref{sec:mixed_single}, we extend the projection method to multiple constraints:
\begin{equation}
    \D = \big\{ (\bx, \bu) \in \X \times \Rm : H_j(\bx, \bu) \geq 0,\ \forall j \in \J \big\}\,,
\end{equation}
with smooth constraint functions  ${H_j: \X \times \Rm \to \R}$.
Note that this involves state constraint (${\D = \C \times \Rm}$), input constraint (${\D = \Rn \times \U}$), and the combination of decoupled state and input constraints (${\D = \C \times \U}$) as special cases, while also capturing general mixed constraints in the state-input space.

We propose to use the projected constraint set:
\begin{equation}
\begin{aligned}
    \Cp & = \big\{ \bx \in \X : \big(\bx, \kb(\bx)\big) \in \D \big\} \\
    & = \{ \bx \in \X : h_j(\bx) \geq 0,\ \forall j\in\J \},
    \label{eq:projected_set_multi}
\end{aligned}
\end{equation}
with the projected constraint functions ${h_j : \X \to \R}$:
\begin{equation}
    h_j(\bx) \triangleq H_j\big(\bx, \kb(\bx)\big).
    \label{eq:h_p_multi}
\end{equation}
Then, with a backup set $\Sb$ and a backup controller $\kb$ satisfying Assumption~\ref{assum:backup}, we construct the invariant set:
\begin{equation}
    \SIp = \left\{ \bx \in \Rn : 
    \begin{array}{l}
    h_j\big(\phib(\theta, \bx)\big) \geq 0,  \forall \theta \!\in\! [0, T],\! \forall j \!\in\! \J,\!\\
    h_{\rm b}\big(\phib(T, \bx)\big) \geq 0
    \end{array}
    \!\!\right\}\!,
\end{equation}
see Fig.~\ref{fig:overview} (where ${h_j\big(\phib(\theta,\bx) \big) \!=\! 0}$ is plotted for various $\theta$).
 
Based on the results of the previous sections, we enforce the invariance of this set and the satisfaction of the mixed constraints using \eqref{eq:cbf_flow} for each $h_j$, \eqref{eq:cbf_term}, and \eqref{eq:cbf_mixed} for each $H_j$.
Thus, we propose the optimization problem (OP):
\begin{equation} \label{eq:QP3}
\begin{aligned}
\mathbf{k}(\bx) = \underset{\bu \in \Rm}{\operatorname{argmin}} \quad \|\bu-\mathbf{k}_{\mathrm{d}}& (\bx)\|^2 \\
\textrm{s.t.} \quad \dot{h}_j \big( \phib(\theta,\bx),\bu \big) & \geq -\alpha_j\big(h_j(\phib(\theta,\bx))\big),
\\ & \qquad \forall \theta \!\in\! [0,T]\,, \forall j\in\J\,, \\
\dot{h}_{\mathrm{b}} \big( \phib(T,\bx),\bu \big) & \geq -\alpha_{\mathrm{b}}\big(\hb(\phib(T,\bx))\big)\,,\\
H_j(\bx,\bu) & \geq 0\,, \quad \forall j\in\J\,.
\end{aligned}
\end{equation}
By construction, this OP is recursively feasible because ${\bk(\bx) = \kb(\bx)}$ is a feasible solution.

\begin{corollary}
\textit{
There exist ${\alpha_j, \alpha_{\rm b} \in \Ke}$ such that the backup CBF-OP~\eqref{eq:QP3} is feasible for all ${\bx \in \SIp}$.
Furthermore, the controller~\eqref{eq:QP3} ensures for all ${\bx_0 \in \SIp}$ that ${\bx(t) \in \SIp}$ and ${\big(\bx(t), \bk(\bx(t))\big) \in \D}$ hold for all $t \geq 0$ for system~\eqref{eq:closedloop}.
}
\end{corollary}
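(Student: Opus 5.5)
The plan is to combine the decoupling argument of Corollary~\ref{th:3} with the single-constraint result of Theorem~\ref{th:4}. For each $j\in\J$, define the individual projected set $\C_{\mathrm{p}}^j=\{\bx : h_j(\bx)\geq 0\}$ and the individual invariant set
\begin{equation*}
\SIp{}^{,j}=\big\{\bx : h_j(\phib(\theta,\bx))\geq 0,\ \forall\theta\in[0,T],\ \hb(\phib(T,\bx))\geq 0\big\}.
\end{equation*}
As in the proof of Corollary~\ref{th:3}, the definitions immediately give $\SIp=\bigcap_{j\in\J}\SIp{}^{,j}$. Assumption~\ref{assum:backup} says $\Sb\subseteq\Cp\subseteq\C_{\mathrm{p}}^j$ for every $j$, and $\kb$ renders $\Sb$ forward invariant. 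Hence each single-constraint problem defined by $H_j$ satisfies the hypotheses of Theorem~\ref{th:4}, with the same backup pair.

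\emph{Feasibility.} For each $j$, Theorem~\ref{th:4} (via Lemmas~\ref{lemma:1}--\ref{lemma:3}) gives functions $\alpha_j,\alpha_{\mathrm{b}}^j$ such that $\bu=\kb(\bx)$ satisfies \eqref{eq:cbf_flow}, \eqref{eq:cbf_term} and $H_j(\bx,\kb(\bx))\geq 0$ for all $\bx\in\SIp{}^{,j}\supseteq\SIp$. The only issue is that the terminal constraint must hold with one common $\alpha_{\mathrm{b}}$. I would address this in two steps.
\begin{itemize}
\item Observe that along $\kb$ the terminal condition is the CBF condition for $\hb$ evaluated on the invariant set $\Sb$. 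Forward invariance of $\Sb$ under $\kb$ gives $\dot h_{\mathrm b}\geq 0$ on its boundary.
\item Take $\alpha_{\mathrm b}(r)=\max_j \alpha_{\mathrm b}^j(r)$. A pointwise maximum of class-$\K$ functions is class-$\K$, and enlarging $\alpha$ only relaxes the inequality $\dot h_{\mathrm b}\geq-\alpha_{\mathrm b}(h_{\mathrm b})$ when $h_{\mathrm b}\geq 0$, which holds since $\phib(T,\bx)\in\Sb$.
\end{itemize}
The same monotonicity argument lets the $\alpha_j$ be taken in $\Ke$. Thus $\kb(\bx)$ satisfies every constraint of \eqref{eq:QP3} at every $\bx\in\SIp$, so the OP is feasible there. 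Since $\U$ does not enter \eqref{eq:QP3}, there is no further input-admissibility check.

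\emph{Invariance and safety.} Assume the solution map $\bk$ of \eqref{eq:QP3} is locally Lipschitz. Then $\bk$ satisfies, for each $j$, the conditions \eqref{eq:cbf_flow}--\eqref{eq:cbf_mixed} of Theorem~\ref{th:4} with $h_j,H_j$ on $\SIp$. The delicate point is that Theorem~\ref{th:4} asks for these on all of $\SIp{}^{,j}$, not only on $\SIp$. I would handle this with the standard Nagumo-type argument behind Theorem~\ref{theo:backup}, applied to the finite family of barrier conditions defining $\SIp$ directly. On the boundary of $\SIp$, any active constraint $h_j(\phib(\theta,\bx))=0$ or $\hb(\phib(T,\bx))=0$ has nonnegative derivative along \eqref{eq:closedloop}. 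Under regularity of the constraint functions this yields forward invariance of the intersection, as in the last step of Corollary~\ref{th:3}. Then $\bx(t)\in\SIp$ for all $t\geq 0$, and the constraints $H_j(\bx(t),\bk(\bx(t)))\geq 0$ hold pointwise because they are imposed in \eqref{eq:QP3}. This gives $(\bx(t),\bk(\bx(t)))\in\D$.

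\emph{Main obstacle.} The hardest step is this invariance of the intersection. Intersection arguments need care because invariance of each $\SIp{}^{,j}$ is only guaranteed if the conditions hold on that larger set. The statement also tacitly needs local Lipschitzness of the OP solution, which is not automatic for a nonconvex $H_j$; I would list it as a standing assumption. I would first try the boundary-tangency argument on $\SIp$ itself, reusing the proof of \cite[Lemma~2]{molnar2023safety} for each active constraint.
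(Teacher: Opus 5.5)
Your proposal is correct and matches the paper's implicit argument; the paper gives no separate proof. It relies on $\bk(\bx)=\kb(\bx)$ being a feasible point of \eqref{eq:QP3}, which follows from Lemmas~\ref{lemma:1}--\ref{lemma:3} applied to each $h_j$, with $\SIp$ written as the intersection of the single-constraint sets as in Corollary~\ref{th:3}. For invariance and safety it relies on Theorem~\ref{th:4} combined with that intersection argument.

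The subtleties you flag are real, and the paper leaves them implicit. First, \eqref{eq:QP3} enforces the conditions only on $\SIp$, not on each larger single-constraint set. Second, the defining family is also indexed by $\theta\in[0,T]$, so it is not finite as you state. Third, local Lipschitzness of the OP solution has to be assumed. Your maximum over the $\alpha_{\rm b}^j$ is harmless but unnecessary: the terminal constraint does not depend on $j$, so any single $\alpha_{\rm b}^j$ already works on $\SIp$.
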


\begin{remark}
The backup CBF-OP~\eqref{eq:QP3} is a quadratic program if and only if $H_j$ are affine in $\bu$:
\begin{equation}
    H_j(\bx, \bu) = b_j(\bx) - \ba_j(\bx)\bu\,, \quad \forall j \in \J \,,
    \label{eq:affine_H}
\end{equation}
where ${\ba_j : \X \to \R^{1 \times m}}$, ${b_j : \X \to \R}$ are smooth functions.
\end{remark}

\begin{remark}
For state constraints (${\ba_j(\bx)=\bzero}$), the last constraint with $H_j$ in~\eqref{eq:QP3} is inactive and can be omitted.
For input constraints ($\ba_j$ and $b_j$ are constants), $H_j$ encodes the input constraint and the optimization is done over ${\bu \in \Rm}$, rather than ${\bu\in\U}$ as in~\eqref{eq:QP2}, while the constraint with $h_j$ is also added.
Hence, for the special case of decoupled state and input constraints (discussed in Section~\ref{sec:backup}) we obtain:
\begin{equation}
\begin{aligned}
\mathbf{k}(\bx) = \underset{\bu \in \Rm}{\operatorname{argmin}} \quad & \|\bu-\mathbf{k}_{\mathrm{d}}(\bx)\|^2 \\
\textrm{s.t.} \quad \dot{\psi} \big( \phib(\theta,\bx),\bu \big) & \geq -\alpha\big(\psi(\phib(\theta,\bx))\big)\,,\ \forall \theta \in [0,T]\,, \\
- \ba_j \dot{\bk}_{\rm b}(\bx,\bu) & \geq -\alpha_j\big(b_j - \ba_j \kb(\bx) \big) \,,\ \forall j\in\J\,, \\
\dot{h}_{\mathrm{b}} \big( \phib(T,\bx),\bu \big) & \geq -\alpha_{\mathrm{b}}\big(\hb(\phib(T,\bx))\big)\,,\\
b_j - \ba_j \bu & \geq 0\,, \quad \forall j\in\J\,.
\end{aligned}
\end{equation}
Compared to the standard backup CBF-QP~\eqref{eq:QP2}, this includes the additional second constraint with the derivative of $\kb$.
\end{remark}

\begin{remark}
The projection may help to simplify the backup controller design.
For decoupled state and input constraints (${\D = \C \times \U}$), the standard backup CBF method requires 
${\Sb \subseteq \C}$ and ${\kb(\bx) \in \U}$ for all ${\bx \in \X}$.
This often means that the backup controller needs to be saturated to enforce input constraints over the entire state space.
As opposed, the projection-based method requires ${\Sb \subseteq \Cp}$, meaning ${\Sb \subseteq \C}$ and ${\kb(\bx) \in \U}$ for all ${\bx \in \Sb}$.
So input constraints only need to hold inside the (small) backup set $\Sb$, eliminating the need for saturation.
This can also simplify the computation of the backup flow.
For example, if~\eqref{eq:control_system} is feedback linearizable, the backup controller can be constructed so that the backup system~\eqref{eq:f_b} is linear, ${\dot{\bx}=\mathbf{A}\bx}$.
Then, the backup flow:
\begin{equation}
    \phib(\theta, \bx) = e^{\mathbf{A}\theta}\bx
    \label{eq:analytical_flow}
\end{equation}
can be calculated analytically.
This eliminates the need for forward integration at each time step, significantly reducing the computational cost of the backup CBF method.
The price of these benefits is that the projection reduces the size of the set where the system operates (i.e., ${\Cp \subseteq \C}$).
\label{remark:linear}
\end{remark}

\section{State-Input-Power Constrained Inverted Pendulum}
We demonstrate the proposed method on an inverted pendulum with unit-coefficient equations of motion:
\begin{equation}
    \underbrace{\begin{bmatrix}
        \dot{x}_1 \\
        \dot{x}_2
    \end{bmatrix}}_{\dot{\bx}} =
    \underbrace{\begin{bmatrix}
        x_2 \\
        \sin(x_1)
    \end{bmatrix}}_{\bf(\bx)} +
    \underbrace{\begin{bmatrix}
        0 \\ 1
    \end{bmatrix}}_{\bg(\bx)} u,
    \label{eq:pendulum}
\end{equation}
where the state ${\bx=[x_1~ x_2]^{\top}}$ contains angle and angular velocity, and the input $u$ is a torque, with power ${P = x_2 u}$.
The primary control objective is to keep the angle $x_1$ within the  bounds ${|x_1|\leq\varphi_{\rm max}}$ with the desired controller ${k_{\rm d}(\bx)=0}$. Moreover, we require input constraints ${u_{\rm min}\leq u\leq u_{\rm max}}$ and power constraints ${P_{\rm min}\leq x_2 u \leq P_{\rm max}}$.
These mixed constraints can be expressed in the affine form~\eqref{eq:affine_H} with:
\begin{align}
    H_1(\bx,\bu) &= \varphi_{\max} - x_1\,, \quad
    H_2(\bx,\bu) = x_1 + \varphi_{\max} \,,
    \label{eq:constr_pos} \\
    H_3(\bx,\bu) &= u_{\max} - u\,, \quad\;\;
    H_4(\bx,\bu) = u-u_{\min}\,,
    \label{eq:constr_u} \\
    H_5(\bx,\bu) &= P_{\max} - x_2 u\,, \;\;
    H_6(\bx,\bu) = x_2 u - P_{\min}\,.
    \label{eq:constr_pwr}
\end{align}
The parameters of this example are listed in Table~\ref{tab:parameters}.

\begin{table}
\centering
\caption{Parameters of the inverted pendulum example}
\vspace{-2mm}
\label{tab:parameters}
\setlength{\tabcolsep}{2.8pt} 
\small 
\begin{tabular}{|c|c|c||c|c|c||c|c|c|} 
\hline
Par. & Val. & Unit & Par. & Val. & Unit & Par. & Val. & Unit \\ \hline
$u_{\rm min}$ & $-1.1$ & Nm & $\varphi_{\rm max}$ & $1.75$ & rad & $\alpha_j(h_j)$ & $15 h_j$ & 1/s \\  
$u_{\rm max}$ & $1.2$  & Nm & $K$& $0.7$  & Nms & $\alpha_{\rm b}(\hb)$ & $\hb$ & 1/s \\
$P_{\rm min}$ & $-0.7$ & W & $X_2$ & $0.2$  & rad/s & $T$ & $8$ & s \\  
$P_{\rm max}$ & $0.2$  & W & $\gamma$ & $0.7$  & 1/s & $N_{\rm c}$ & $100$ & 1 \\ \hline
\end{tabular}
\end{table}

For the backup controller, we utilize feedback linearization and velocity feedback with gain ${K>0}$, as in \cite{Kobackup}:
\begin{equation}
    k_{\rm b}(\bx)=-\sin(x_1) -K x_2\,.
    \label{eq:pend_backup_controller}
\end{equation}
The six mixed state-input constraints are then projected onto the state space,
creating the projected constraint set $\Cp$ in~\eqref{eq:projected_set_multi}.
Note that input limits are captured by $\Cp$, thus it is not necessary to saturate the backup controller.
Under the backup policy~\eqref{eq:pend_backup_controller}, the system becomes linear and the backup flow can be computed analytically using~\eqref{eq:analytical_flow}:
\begin{equation}
    \phib(\theta,\bx)=\begin{bmatrix}
        \varphi_1(\theta,\bx)\\
        \varphi_2(\theta,\bx)
    \end{bmatrix}=\begin{bmatrix}
        x_1+\frac{1}{K}(1-e^{-K \theta}) x_2\\
        e^{-K\theta}x_2
    \end{bmatrix}\,,
    \label{eq:pend_flow}
\end{equation}
thus there is no need to integrate the backup flow online, and backup CBFs become explicit instead of implicit.
The backup flow converges to the equilibrium ${\bx^*=[x_1 + \frac{x_2}{K}~ 0]^{\top}}$, whose location depends on the initial condition. 
The backup set is then chosen to be a subset of $\Cp$, such that it contains all possible equilibria inside $\Cp$. One such backup set is an ellipse of size ${(\varphi_{\rm max},X_2)}$, given by:
\begin{equation}
    h_{\rm b}(\bx)=1-\left(\frac{x_1}{\varphi_{\rm max}}\right)^2-\left(\frac{x_2}{X_2}\right)^2\,. 
    \label{eq:pend_backup_set}
\end{equation} 

To highlight the effect of input and power constraints, we will compare our method to a classical, high-order CBF (HOCBF) technique \cite{xiao2022hocbf} that only enforces state limits via:
\begin{equation}
    \psi_{\rm e}(\bx)=-2x_1x_2+\gamma \big(\varphi_{\rm max}^2-x_1^2\big)\,,
    \label{eq:pend_hocbf}
\end{equation}
with ${\gamma > 0}$, derived from ${\psi(\bx) = \varphi_{\rm max}^2-x_1^2}$.

\begin{figure}
    \centering
    \includegraphics[width=0.99\linewidth]{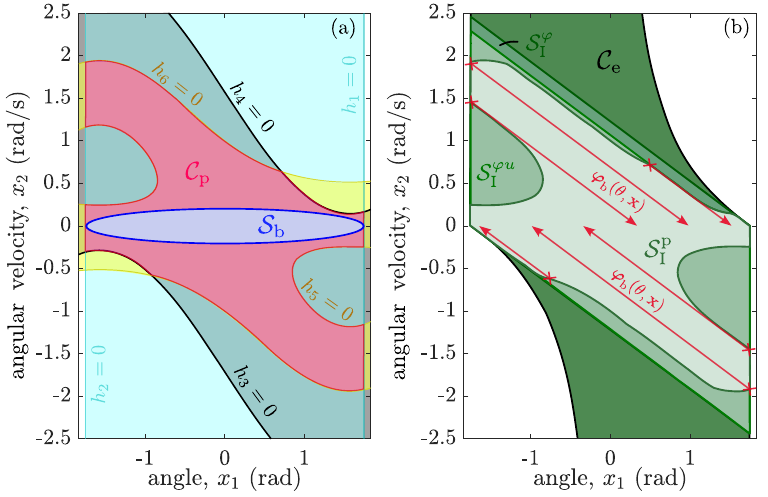}
    \vspace{-7mm}
    \caption{
    Set analysis for the inverted pendulum~\eqref{eq:pendulum} with mixed state-input constraints.
    Panel (a) depicts the six constraint sets from~\eqref{eq:constr_pos}-\eqref{eq:constr_pwr} that are projected using the backup controller~\eqref{eq:pend_backup_controller}
    (light blue: state, gray: input, yellow: power constraints).
    Their intersection, $\Cp$ from~\eqref{eq:projected_set_multi} (red), contains the backup set $\Sb$ from~\eqref{eq:pend_backup_set} (blue).
    Panel (b) compares four sets:
    the zero-superlevel set $\C_{\rm e}$ of the HOCBF in~\eqref{eq:pend_hocbf},
    the invariant set $\SI^{\varphi}$ for state constraints~\eqref{eq:constr_pos},
    the set $\SI^{\varphi u}$ for state and input constraints~\eqref{eq:constr_pos}-\eqref{eq:constr_u},
    and the set $\SIp$ for all six constraints~\eqref{eq:constr_pos}-\eqref{eq:constr_pwr}.
    The backup flows~\eqref{eq:pend_flow} launched from different initial conditions are also highlighted (red lines).}
    \vspace{-5mm}
\label{fig:sets}
\end{figure}

Figure~\ref{fig:sets} shows the sets $\Cp$, $\Sb$, $\SIp$ for the proposed method.
Panel (a) depicts the projected constraint set $\Cp$ (red) as the intersection of the zero-superlevel sets of
$h_j$ obtained
from~\eqref{eq:constr_pos}-\eqref{eq:constr_pwr} (light blue, gray, and yellow areas show state, input, and power constraints, respectively).
The backup set $\Sb$ (blue) from~\eqref{eq:pend_backup_set} lies inside $\Cp$ for the chosen parameters.
Panel (b) depicts the corresponding invariant sets.
The zero-superlevel set of the HOCBF $\psi_{\rm e}$ is $\C_{\rm e}$.
The invariant set $\SI^{\varphi}$ is calculated by considering only the state constraints~\eqref{eq:constr_pos}, $\SI^{\varphi u}$ is based on state and input constraints~\eqref{eq:constr_pos}-\eqref{eq:constr_u}, while $\SIp$ accounts for all six constraints~\eqref{eq:constr_pos}-\eqref{eq:constr_pwr}.
Notice how the sets are getting smaller.
The smaller size of $\SI^{\varphi}$ compared to $\C_{\rm e}$ indicates the effect of using backup CBFs to ensure recursive feasibility.
The size of ${\SI^{\varphi u}}$ shows the effect of input constraints, while $\SIp$ highlights the impact of power constraints.
All of these sets were computed analytically, as the backup flow~\eqref{eq:pend_flow} is explicit.
The backup flow (plotted in red) exponentially approaches the horizontal axis along oblique lines of slope $-K$ without exiting $\SI^{\varphi}$, $\SI^{\varphi u}$, or $\SIp$. 

Figure~\ref{fig:simulation} depicts the numerical simulation of~\eqref{eq:pendulum} using the backup CBF-QP~\eqref{eq:QP3} with constraints~\eqref{eq:constr_pos}-\eqref{eq:constr_pwr}, backup controller~\eqref{eq:pend_backup_controller}, backup set~\eqref{eq:pend_backup_set}, and two initial conditions (solid blue and orange).
We compare this to the HOCBF method using~\eqref{eq:pend_hocbf} in the QP~\eqref{eq:QP} while including input and power constraints directly in the QP.
The HOCBF-QP becomes infeasible, hence the trajectories (dotted blue and orange) leave $\C_{\rm e}$ slightly and violate the input and power limits.
Meanwhile, the backup CBF-QP satisfies all mixed state-input constraints, even for this open-loop unstable system.
This is thanks to forward prediction, which makes the backup CBF intervene earlier than the HOCBF; see panel (c).

\begin{figure}
    \centering
    \includegraphics[width=0.99\linewidth]{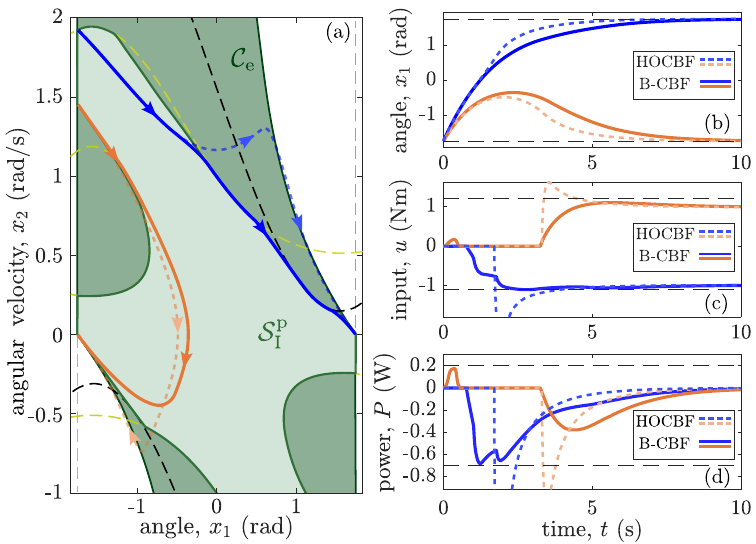}
    \vspace{-7mm}
    \caption{Simulation of the inverted pendulum~\eqref{eq:pendulum} using the proposed backup CBF-QP~\eqref{eq:QP3} (solid) and, for comparison, the HOCBF~\eqref{eq:pend_hocbf} (dotted).
    Panel (a) illustrates two trajectories (blue and orange) initiated from the invariant set $\SIp$ (green) where they remain throughout the simulation.
    The boundaries of the projected constraint set $\Cp$
    are indicated by dashed lines (light blue: state, black: input, yellow: power constraint).
    Panels (b)-(d) display the evolution of angle, input, and power signals, with dashed black lines representing their limits.
    The proposed method maintains all constraints.
    }
    \vspace{-5mm}
\label{fig:simulation}
\end{figure}

\section{Conclusion}
\label{sec:concl}

In this paper, we proposed a systematic control approach to enforcing mixed state-input constraints.
We introduced a projection to convert input constraints and more complex, mixed (e.g., power) constraints to state constraints, and enforce them using backup CBFs in a recursively feasible manner.
Our approach also enhances the classical backup CBF framework by eliminating the need for saturated backup controllers and potentially enabling analytical backup flow calculation.
We demonstrated our theoretical results on an inverted pendulum with state, input, and power constraints.





\bibliographystyle{IEEEtran}
\bibliography{lcss_cdc_2026}	

\end{document}